\documentclass[rmp,onecolumn]{revtex4-2}
\usepackage{bm}
\usepackage{graphicx}
\usepackage{verbatim}
\usepackage{amsthm}
\usepackage{float}
\usepackage[section]{placeins}
\usepackage{amsmath,amssymb}
\newtheorem{lemma}{Lemma}
\usepackage{hyperref}

\begin{document}
\title{Probing frustrated spin systems with impurities%
}
\author{Maksymilian Kliczkowski}
\affiliation{Institute of Theoretical Physics, Wroc{\l}aw University of Science and Technology, 50-370 Wroc{\l}aw, Poland}
\author{Jakub Grabowski}
\affiliation{Institute of Theoretical Physics, Wroc{\l}aw University of Science and Technology, 50-370 Wroc{\l}aw, Poland}
\author{Maciej M. Ma\'ska}
\email[corresponding author: ]{maciej.maska@pwr.edu.pl}
\affiliation{Institute of Theoretical Physics, Wroc{\l}aw University of Science and Technology, 50-370 Wroc{\l}aw, Poland}

\begin{abstract}
We investigate the effective interaction between two localized spin impurities embedded in a frustrated spin--$\frac{1}{2}$ $J_1\!-\!J_2$ Heisenberg chain. Treating the impurity spins as classical moments coupled locally to the host, we combine second-order perturbation theory with large-scale density matrix renormalization group (DMRG) calculations to determine the impurity–impurity interaction as a function of separation, coupling strength, and magnetic frustration. In the weak-coupling regime, we show that the interaction is governed by the the static spin susceptibility of the host and exhibits oscillatory power-law decay in the gapless phase, modified by universal logarithmic corrections at the SU(2)--symmetric critical point. In the gapped dimerized phase, the interaction decays exponentially with distance. For intermediate and strong impurity–host coupling, we observe a crossover to a boundary-dominated regime characterized by pronounced parity effects associated with the length of the chain segment between impurities, signaling a breakdown of the simple RKKY--like description. Our results establish impurity--impurity interactions as a sensitive probe of frustrated quantum spin liquids and provide a controlled framework for distinguishing gapless and gapped phases through local perturbations.
\end{abstract}

\maketitle
  
\section{Introduction}
Quantum spin liquids (QSLs) are paradigmatic examples of strongly correlated quantum matter, characterized by the absence of conventional symmetry-breaking magnetic order even at zero temperature and by the presence of highly entangled ground states and fractionalized excitations \cite{wen_quantumorders_2002, Balents2010,Savary2017}. While much of the modern interest in QSLs focuses on two-dimensional frustrated magnets~\cite{sachdev_kagometriangularlattice_1992,balents_fractionalizationeasyaxis_2002, senthil_microscopicmodels_2002, motrunich_exoticorder_2002, jiang_densitymatrix_2008, yan_spinliquidground_2011, schmidt_frustratedtwo_2017, broholm_quantumspin_2020, zhu_quantumspin_2025, matsuda_kitaevquantum_2025a}, one-dimensional quantum spin systems provide an essential theoretical laboratory in which many key features of spin-liquid physics--such as strong quantum fluctuations, the absence of long-range order, and spin fractionalization--can be studied in a controlled and often analytically tractable setting~\cite{dasgupta_lowtemperatureproperties_1980,fisher_randomantiferromagnetic_1994,sato_competingphases_2011,imambekov_onedimensionalquantum_2012,lecheminant_onedimensionalquantum_2003, amusia_onedimensionalquantum_2020}.

A canonical example is the spin--$\frac{1}{2}$ frustrated $J_1–J_2$ Heisenberg chain. By tuning the ratio $J_2/J_1$, this model exhibits a quantum phase transition from a gapless Luttinger-liquid phase with algebraically decaying spin correlations to a gapped, spontaneously dimerized phase with short-range correlations \cite{Haldane1982,Majumdar1969,Okamoto1992}. The gapless phase is often regarded as a one-dimensional quantum spin liquid in a broad sense, featuring deconfined spin--$\frac{1}{2}$ spinon excitations and power-law correlations, while the gapped phase provides a minimal realization of a valence-bond-solid state induced by frustration. Throughout this work, we use the term ``quantum spin liquid'' to refer to a magnetically disordered state with fractionalized excitations and algebraic correlations. This definition does not imply topological order.

Magnetic impurities constitute a powerful probe of correlated quantum spin systems~\cite{eggert_phasediagram_2001, laflorencie_kondoeffect_2008, bayat_entanglementprobe_2012}. A single impurity locally perturbs the host and generates an extended polarization cloud that reflects the intrinsic spin correlations and low-energy excitations of the system \cite{Eggert1992,Laflorencie2006}. When two or more impurities are present, their interaction is mediated by the host degrees of freedom and encodes detailed information about the static and dynamical spin response. In itinerant electron systems this mechanism is well known as the Ruderman–Kittel–Kasuya–Yosida (RKKY) interaction \cite{Ruderman1954,Kasuya1956,Yosida1957, bayat_entanglementprobe_2012}. In quantum spin systems, however, the absence of fermionic quasiparticles and the presence of strong correlations lead to qualitatively distinct behavior, raising fundamental questions regarding the spatial range, oscillatory structure, and symmetry properties of impurity-induced interactions. As an example, a characteristic critical behavior of Friedel oscillations was recently reported in the spin--$\frac{1}{2}$ frustrated $J_1–J_2$ Heisenberg chain in the spin-nematic regime, in the presence of impurities of the local-bond disorder type~\cite{nakamura_impurityeffects_2024}.

In one-dimensional spin chains, the impurity problem has attracted long-standing interest, particularly in the context of Kondo physics and boundary critical phenomena \cite{Affleck1991, laflorencie_kondoeffect_2008}. In gapless spin chains, impurity-induced correlations decay algebraically, suggesting long-ranged effective interactions between impurities, while in gapped systems such interactions are expected to be exponentially suppressed beyond the correlation length. In frustrated chains, additional richness arises from incommensurate correlations and frustration-induced oscillations, which can generate competing effective couplings and potentially favor non-collinear or helical arrangements of impurity moments, even though the host system itself remains non-magnetic~\cite{eggert_phasediagram_2001, sarkar_quantumphase_2008}.

Understanding impurity-impurity interactions in frustrated spin chains is therefore closely connected to quantum spin liquids~\cite{kolezhuk_theoryquantum_2006}, where Kondo physics with highly unconventional characteristics may arise~\cite{dhochak_magneticimpurities_2010, das_kondoroute_2016, lee_kondoscreening_2023, lu_detectingsymmetry_2024, legg_spinliquid_2019, sorensen_kondoscreening_2024}. First, such interactions provide an indirect, yet sensitive diagnostic of the host's excitation spectrum and correlation length, allowing one to distinguish gapless and gapped quantum phases without directly probing bulk excitations. Second, impurity spins may develop collective behavior--such as frustration-driven non-collinear order--that reflects the underlying spin-liquid correlations of the host. Finally, impurity physics is of practical relevance to real materials, where disorder and substitutional defects are unavoidable, and to engineered quantum systems, including cold-atom simulators and solid-state spin-chain platforms.

Previous studies of impurities in one-dimensional quantum spin chains have focused primarily on single-impurity physics~\cite{laflorencie_kondoeffect_2008}, boundary critical phenomena, and Kondo-like screening effects in gapless systems \cite{Eggert1992,Affleck1991,Laflorencie2006}. These works have provided a detailed understanding of how a localized defect modifies local observables, induces polarization clouds, or renormalizes boundary conditions at low energies. By contrast, the problem of effective interactions between multiple impurities--particularly in frustrated spin chains--has received comparatively less attention~\cite{bayat_entanglementprobe_2012}. In such settings, impurities are not merely passive probes of the host but can interact non-trivially through the collective spin degrees of freedom, giving rise to distance-dependent, oscillatory, and potentially frustrated effective couplings.

In the present work, we move beyond the single-impurity perspective and explicitly focus on the two-impurity problem in the frustrated $J_1\!-\!J_2$ Heisenberg chain. This allows us to directly connect impurity-impurity interactions to the intrinsic correlation functions of the host and to elucidate how these interactions evolve across the transition between gapless and gapped quantum phases. Unlike studies centered on boundary impurities or Kondo screening, we treat impurities as local moments embedded in the bulk and analyze their mutual interaction as a function of separation and orientation. This approach highlights the role of frustration and incommensurate correlations in shaping impurity-induced physics and establishes impurity interactions as a sensitive probe of one-dimensional quantum spin liquids. To our knowledge, a systematic comparison between perturbative susceptibility-based predictions and nonperturbative DMRG calculations of impurity–impurity interactions across the $J_1\!-\!J_2$ phase diagram has not been reported.

\section{Model}

We consider a frustrated spin--$\frac{1}{2}$ Heisenberg chain locally coupled to two impurity spins. 
The Hamiltonian is
\begin{equation}
    H = H_{J_1\!-\!J_2} + H_{\rm imp},
\end{equation}
where
\begin{align}
    H_{J_1\!-\!J_2} &= J_1\sum_i \hat{\bm S}_i\cdot \hat{\bm S}_{i+1}
    + J_2\sum_i \hat{\bm S}_i\cdot \hat{\bm S}_{i+2},\\
    H_{\rm imp} &= J_c\left({\bm S}_{c,1}\cdot\hat{\bm S}_i
    + {\bm S}_{c,2}\cdot\hat{\bm S}_j\right).
\end{align}
Here, $\hat{\bm S}_i$ denotes a quantum spin--$\frac{1}{2}$ operator at the chain site $i$, while 
${\bm S}_{c,1}$ and ${\bm S}_{c,2}$ are classical impurity spins coupled to chain sites $i$ and $j$, respectively.
Our goal is to derive an effective interaction between the impurity spins mediated by the spin chain,
\begin{equation}
    V({\bm S}_{c,1},{\bm S}_{c,2},r) = \sum_{a,b} J_{ab}(r)\, S_{c,1}^a S_{c,2}^b,
    \label{eq:H_eff}
\end{equation}
where $r=|i-j|$ is the separation of impurities. We are particularly interested in how the decay of the interaction with increasing $r$ depends on whether the bulk system's spectrum is gapped or gapless. This may be useful for probing the nature of the underlying quantum state using magnetic impurities in QSLs.

\section{Weak-coupling regime}

We first focus on the weak-coupling regime $J_c \ll J_1, J_2$, where the impurity–chain interaction can be treated perturbatively.
Using standard Rayleigh--Schrödinger perturbation theory, the second-order correction to the ground-state energy reads
\begin{subequations}
\begin{align}
    \Delta E^{(2)}
    =&\:-\sum_{n\neq 0}\frac{\langle 0| H_{\rm imp}| n\rangle\langle n|H_{\rm imp}|0\rangle}{E_n-E_0}\\
    =&\: -J_c^2 \sum_{a} \left(S_{c,1}^a\right)^2 \sum_{n\neq 0}
    \frac{\langle 0| \hat{S}^a_i| n\rangle\langle n|\hat{S}^a_i|0\rangle 
    + \langle 0| \hat{S}^a_i| n\rangle\langle n|\hat{S}^a_i|0\rangle}{E_n-E_0}\\
    &\:-J_c^2\sum_{b}\left(S_{c,2}^b\right)^2\sum_{n\neq 0}\frac{
    \langle 0| \hat{S}^b_j| n\rangle\langle n|\hat{S}^b_j|0\rangle
    +\langle 0| \hat{S}^b_j| n\rangle\langle n|\hat{S}^b_j|0\rangle}{E_n-E_0}\\
    &\:-J_c^2\sum_{a,b}S_{c,1}^a S_{c,2}^b\sum_{n\neq 0}
    \frac{\langle 0| \hat{S}^a_i| n\rangle\langle n|\hat{S}^b_j|0\rangle
    +\langle 0| \hat{S}^b_j| n\rangle\langle n|\hat{S}^a_i|0\rangle}{E_n-E_0}\label{eq:cross-term}\\
    =&\:\Delta E^{(2)}_{11}+\Delta E^{(2)}_{22}+\Delta E^{(2)}_{12}
\label{eq:sec_ord}
\end{align}
\end{subequations}
where $\{|n\rangle\}$ are the excited eigenstates of $H_{J_1\!-\!J_2}$ with energies $E_n$. $\:\Delta E^{(2)}_{11}$ ($\Delta E^{(2)}_{22}$) corresponds to impurity 1 (2) interacting with itself via the chain. $\Delta E^{(2)}_{12}$ is the cross term that describes the effective interaction between impurities 1 and 2. For now, we will focus on this term.

The sum over intermediate states in Eq.~\eqref{eq:cross-term} can be related to the retarded spin susceptibility of the host chain,
\begin{equation}
    \chi_{ab}(r,t)
    =
    -i\,\Theta(t)
    \left\langle
    \left[
    \hat{S}^a_i(t),
    \hat{S}^b_j(0)
    \right]
    \right\rangle,
\end{equation}
where $r=|i-j|$ and
$\hat{S}^a_i(t)=e^{iH_{J_1J_2}t} \hat{S}^a_i e^{-iH_{J_1J_2}t}$.
Its Fourier transform,
\begin{equation}
    \chi_{ab}(r,\omega)
    =
    \int_0^\infty dt\,
    e^{i(\omega+i0^+)t}
    (-i)
    \left\langle
    \left[
    \hat{S}^a_i(t),
    \hat{S}^b_j(0)
    \right]
    \right\rangle,
\end{equation}
admits the Lehmann representation
\begin{equation}
\chi_{ab}(r,\omega) = \sum_{n\neq 0} \left( \frac{
\langle 0| \hat{S}^a_i| n\rangle\langle n|\hat{S}^b_j|0\rangle
}{\omega-(E_n-E_0)+i0^+} - \frac{\langle 0| \hat{S}^b_j| n\rangle\langle n|\hat{S}^a_i|0\rangle
}{\omega+(E_n-E_0)+i0^+}\right).
\label{eq:lehmann}
\end{equation}
Taking the real part of Eq.~\eqref{eq:lehmann} at $\omega=0$, one finds that it coincides with the sum appearing in Eq.~\eqref{eq:sec_ord}.
As a result, the impurity--impurity interaction energy can be written as
\begin{equation}
    V({\bm S}_{c,1},{\bm S}_{c,2},r) = -J_c^2 \sum_{a,b} S_{c,1}^a S_{c,2}^b \,{\rm Re}\,\chi_{ab}(r,\omega=0).
    \label{eq:final1}
\end{equation}
Equation~\eqref{eq:final1} is the direct analog of the RKKY interaction in itinerant electron systems, with the electronic spin susceptibility replaced by the static spin susceptibility of the quantum spin chain.
Thus, in the weak-coupling regime, the effective interaction between impurities is entirely governed by the static spin response of the host.

If the spin chain is SU(2)-symmetric and invariant under time reversal, the susceptibility tensor is diagonal,
$\chi_{ab}(r,\omega)=\delta_{ab}\,\chi(r,\omega)$, and Eq.~\eqref{eq:final1} reduces to the isotropic bilinear form
\begin{equation}
    V({\bm S}_{c,1},{\bm S}_{c,2},r) = -J_c^2\, {\rm Re}\,\chi(r,\omega=0)\,{\bm S}_{c,1}\cdot{\bm S}_{c,2}.
\end{equation}
Since the magnitude of the classical spins is fixed $S_c=|{\bm S}_{c,i}|$, $H_{\rm eff}$ depends only on $r$ and the angle $\theta$ between ${\bm S}_{c,1}$ and ${\bm S}_{c,1}$,
\begin{equation}
    V(r,\theta) = -(J_c\,S_c)^2\, {\rm Re}\,\chi(r,\omega=0)\, \cos\theta.
    \label{eq:H_eff_perturb}
\end{equation}

Although the full lattice-scale form of ${\rm Re}\,\chi(r,\omega=0)$ is not known analytically for the $J_1$--$J_2$ Heisenberg chain, its long-distance behavior is well established. In particular, the low-energy limit of this model is well
understood in terms of the Wess-Zumino-Witten nonlinear $\sigma$ model with $k=1$ \cite{Affleck1990}. For $r\gg 1$ one finds
\begin{itemize}
    \item
    ${\rm Re}\,\chi(r,\omega=0)\sim (-1)^r\,r^{-1}$ in the gapless phase,
    $J_2/J_1 < J_2^c \approx 0.2411$, as follows from the Luttinger-liquid
    description of the spin--$\frac{1}{2}$ Heisenberg chain
    \cite{Okamoto1992,Sandvik2010,Kumar2007}.
    \item
    At the SU(2)-symmetric point, this power-law decay is modified by an universal
    multiplicative logarithmic corrections originating from a marginally
    irrelevant current--current interaction in the low-energy theory,
    \begin{equation}
        {\rm Re}\,\chi(r,\omega=0)
        \sim
        (-1)^r\,\frac{\sqrt{\ln(r/r_0)}}{r},
    \end{equation}
    where $r_0$ is a nonuniversal microscopic length scale
    \cite{Eggert1996}.
    \item
    ${\rm Re}\,\chi(r,\omega=0)\sim (-1)^r\,r^{-1/2}e^{-r/\xi}$, where $\xi$ denotes the spin correlation length, in the gapped, dimerized phase, $J_2/J_1 > J_2^c$, as described by the massive sine-Gordon field theory \cite{Haldane1982,Okamoto1992}. The exponent $1/2$ in this phase reflects the relativistic massive field-theory description of the dimerized state.
    \item 
    In this phase, there is a special point, $J_2=J_1/2$, known as the Majumdar-Ghosh (MG) limit \cite{MG69,MG69a,Caspers84,Ravi07}. At this point, the ground state is an exactly dimerized valence-bond state with strictly short-ranged spin correlations. Consequently, in the weak-coupling regime, the impurity–impurity interaction is exponentially short-ranged, with a correlation length of the order of one lattice spacing.
\end{itemize}

\section{Strong-coupling regime}\label{sec:strong-coupling}
When $J_c\gg J_1,\,J_2$, each impurity no longer acts as a weak probe of the spin liquid. Instead, it pins a quantum spin into a strong composite object with the impurity. For $J_c\to\infty$, the quantum spin at the impurity site is no longer a dynamical degree of freedom. Rather, it becomes a static boundary condition for the remaining chain. As a result, the chain is effectively cut into three segments, as illustrated in Fig.~\ref{fig:3sections}.
\begin{figure}[!h]
    \centering
    \includegraphics[width=0.6\linewidth]{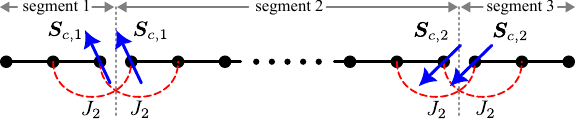}
    \caption{Illustration how the chain is divided into three segments for $J_c\to\infty$.}
    \label{fig:3sections}
\end{figure}
In addition to having shared boundary conditions, the individual segments are coupled via $J_2$, as shown by the red dashed lines. While the polarization of the spin at the impurity-coupled site is fixed, the rest of the system remains in the form of quantum chains whose lengths depend on the positions of the impurities. In particular, if the impurities are coupled to lattice sites $i_1$ and $i_2$, the length of the chain between them is $l = i_2 - i_1 - 1$. For even $l$, the chain has a singlet ground state, whereas for odd $l$, it is an effective spin $1/2$ doublet. Therefore, unlike the weak-coupling regime, the total ground-state energy depends sensitively on the parity of the segment length between impurities. This leads to an intrinsic alternation of the two-impurity energy with separation. This non-perturbative parity effect cannot be absorbed into a smooth envelope, which signals the breakdown of the weak-coupling RKKY description. In the appendix, we prove that the fact that the orientation of classical spins enters the energy only through the dot product of the vectors is a general property of SU(2)-symmetric systems and is not limited to the weak-coupling regime. However, unlike in the weak-coupling case, the dependence of the energy on the dot product can be nonlinear. This implies that the interaction energy is not simply proportional to $\cos\theta$.

Unlike in the weak-coupling regime, the MG limit can be nontrivial for $J_c \gtrsim J_1, J_2$. A strong impurity at a given lattice site breaks a local singlet, creating a free spin--$\frac{1}{2}$ soliton that can propagate along the chain at a finite energy cost. In the case of two impurities, the two released spins interact, and their interaction energy depends on their relative position. This produces an impurity-impurity interaction mediated by spinon exchange rather than by conventional susceptibility, as in the weak-coupling regime. The effective interaction decays exponentially with a decay length set by the spinon localization length. However, this interesting regime is beyond the scope of this work.

For $J_c\to\infty$ the quantum spins $\hat{\bm S}_i$ at the impurity sites are fully polarized along the classical spins ${\bm S}_{c,1/2}$. Therefore, there are no low-energy fluctuations left at these sites. This limit is exactly solvable, because it reduces to open Heisenberg chains plus constants. For large but finite $J_c$, one can perform a Schrieffer–Wolff (strong-coupling) expansion in powers of $J_{1,2}/J_c$. To leading order, corrections scale as $J_{1,2}^2/J_c$, indicating that the strong-coupling fixed point is stable. However, this is beyond the scope of this work.

The transition between the weak- and strong-coupling regimes as a function of $J_c$ constitutes a crossover between bulk-mediated and boundary-mediated impurity interactions.

\section{General case}
When the value of the coupling $J_c$ between the impurities and the $J_1–J_2$ chain does not allow for a perturbative treatment, numerical methods must be applied to determine the effective impurity-impurity interaction. For this purpose, we apply the Density Matrix Renormalization Group (DMRG) approach. This procedure involves finding the ground state energy as a function of the impurity spins' orientation and the distance between the impurities. Therefore,  Eq.~\eqref{eq:H_eff_perturb} needs to be generalized,
\begin{equation}
    V(r,\theta) = -(J_c\,S_c)^2\,f(r,\cos\theta),
    \label{eq:Jc_scaling}
\end{equation}
where $f(r,\cos\theta)$ must be determined numerically. 
\section{Numerical results}
To numerically calculate the effective interaction potential between the impurity spins, we calculate
\begin{align}
    V(r,\theta) &= \left(E_{12}(r,\cos\theta) - E_0\right) - (E_1 - E_0) - (E_2 - E_0) \nonumber \\
    &= E_{12}(r,\cos\theta) - E_1 - E_2 + E_0,
    \label{eq:int_en}
\end{align}
where $E_{12}(r,\cos\theta)$, $E_{1,2}$, and $E_0$, are the ground state energies with two, one, and no impurities, respectively. In the thermodynamic limit, for fixed $J_cS_c$, $E_{12}(r,\cos\theta)$ depends solely on the relative separation of the impurities and on the angle between their spins, while $E_1=E_2$ and is independent of both the impurity’s position and the orientation of its spin. However, in DMRG calculations, a finite chain with open boundaries is used and in this case the values of both $E_{12}(r,\cos\theta)$ and $E_{1,2}$ depend on the absolute positions of the impurities. This is particularly pronounced when $J_2/J_1$ is close to the critical value, where boundary effects decay only algebraically. To minimize this effect, we position impurities 1 and 2 symmetrically around the center of the chain. This has two advantages: ({\em i\,}) for small and intermediate distances $r$, the impurities are relatively far from the boundaries of the chain, and ({\em ii\,}) the boundary-induced Friedel oscillations cancel out as much as possible. 
\begin{figure}[!htb]
    \centering
    \includegraphics[width=0.52\linewidth]{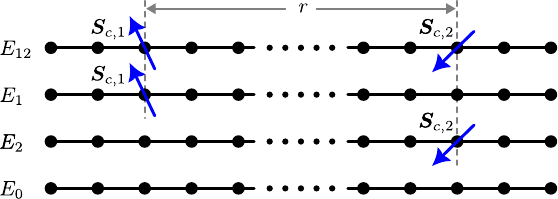}
    \caption{Illustration how the components of the interaction energy \eqref{eq:int_en} is calculated.}
    \label{fig:int_en}
\end{figure}
Additionally, we calculate $E_1$ and $E_2$ for impurities 1 and 2 located on the same lattice sites as when calculating $E_{12}(r,\cos\theta)$, as illustrated in Fig.~\ref{fig:int_en}.

\begin{figure}[!htbp]
    \centering
    \includegraphics[width=0.8\linewidth]{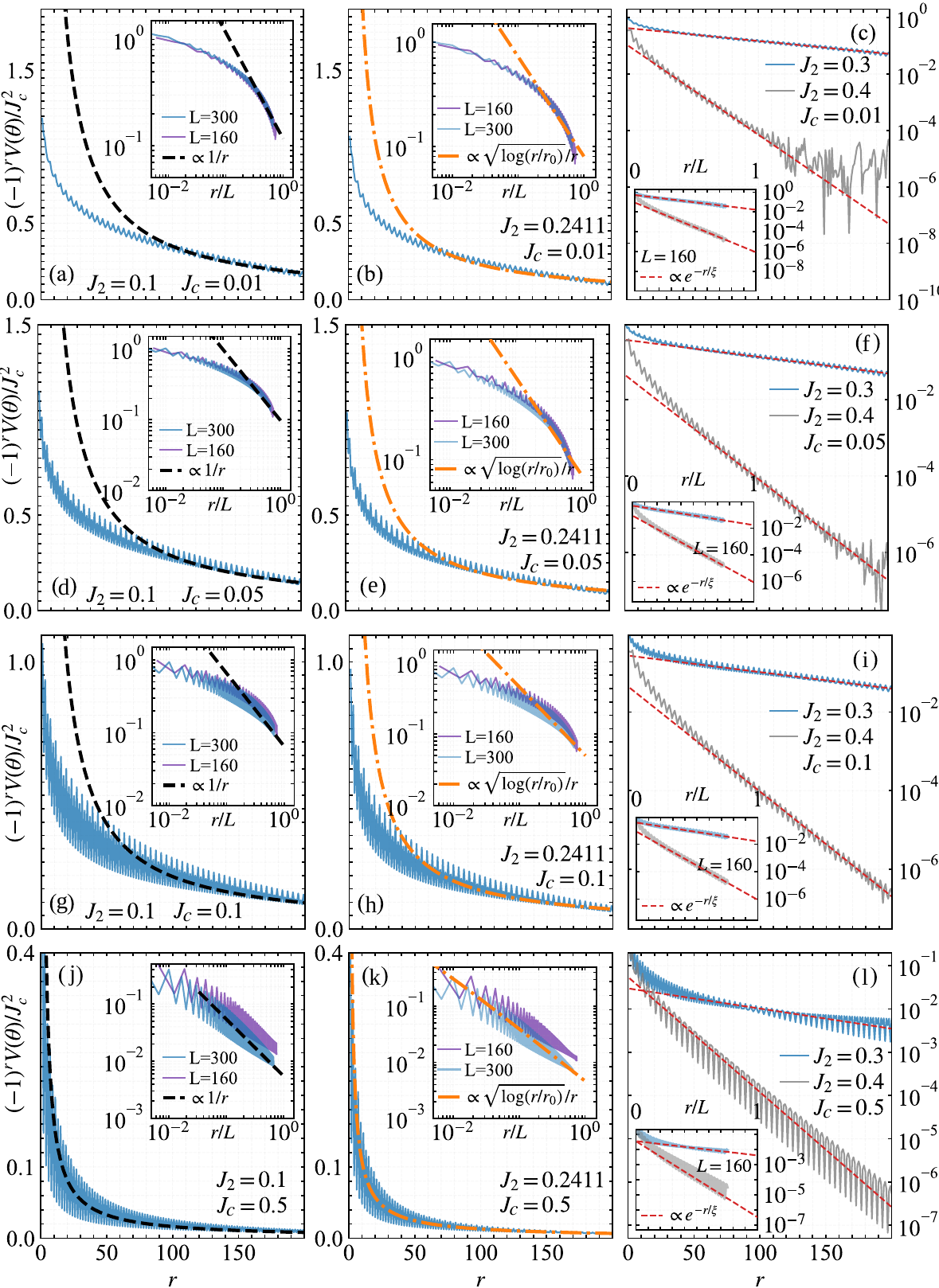}
    \caption{The effective interaction, $V(r,\theta=0)$ (cf. Eq.~\eqref{eq:int_en}), for different values of $J_2$ and $J_c$. The first and second columns show the results for $J_2=0.1$ and $J_2=J_2^c$, respectively. The third column shows the results for $J_2=0.3$ and $J_2=0.4$. The first, second, third, and fourth rows show the results for $J_c$ equal to $0.01, 0.05, 0.1$, and $0.5$, respectively.}
    \label{fig:phase_diagrams}
\end{figure}

The results are summarized in Fig.~\ref{fig:phase_diagrams}, where the effective interaction is shown for different $J_2$ (between 0.1 and 0.4) and different $J_c$ (between 0.1 and 0.5). Two system sizes were studied: $L=160$ and $L=300$ lattice sites. Since the perturbation approach predicts oscillating behavior in all cases, with the sign of the effective interaction changing every lattice site, the results presented are multiplied by $(-1)^r$, where $r$ is the distance between the impurities. Furthermore, according to Eq.~\eqref{eq:Jc_scaling}, the results were normalized by dividing by $J_c$.

On the basis of these plots, several observations can be made. Since the perturbation results are for $r\to\infty$, it cannot be expected that they fit the numerical curves precisely at small distances. Disagreement is observed mainly for small $J_2$ and small $J_c$. Interestingly, the power-law perturbative formula works for $J_c=0.5$, which is when the perturbative approach is least valid. However, while the exponential decay of the effective interaction can be clearly visible for $J_2>J_2^c$ and almost all values of $J_c$, for $J_c\lesssim J_2^c$ the numerical results deviate from the perturbative predictions even when the impurities are distant. This is particularly evident in the insets of panels (a),(b),(d),(e),(g), and (h), which present the same results on a log-log scale. In this case, power-law behavior should manifest itself as straight lines. We attribute these deviations mainly to finite size effects. First, the deviations are mostly present for $J_2 \lesssim J_2^c$, where the correlation length is clearly much larger than for $J_2 > J_2^c$, for which the exponential decay fits very well. Furthermore, these deviations decrease as $J_c$ increases, i.e., as the range of the effective interaction decreases.

Another observation is that the magnitude of the effective interaction oscillates. A detailed analysis reveals that the main period is four lattice constants. This arises from the superposition of oscillations with two slightly different periods, each corresponding to twice the lattice constant. These oscillations clearly increase with increasing $J_c$. The occurrence of these oscillations indicates that the system is no longer in the RKKY-like state, but is approaching the strong coupling regime. In this regime, the energy depends on the parity of the section between the impurities, as described in Sec.~\ref{sec:strong-coupling}. The complex pattern of these oscillations results from the fact that energy depends not only on the length of the section between the impurities, but also on the lengths of the sections between the impurities and the edges of the chain. This can be inferred from the dependence of energy on the position of a single impurity ($E_1, E_2$, cf. Eq.~\eqref{eq:int_en}) coupled to the chain (not shown).

We also performed DMRG calculations for larger values of $J_2$ and $J_c$, but the results of these calculations are not shown in the plots. They can be summarized as follows: as $J_2$ approaches the MG point $J_1/2$, the correlations become extremely short. At precisely $J_2=J_1/2$, the ground state is an exact product of nearest-neighbor singlets, resulting in the correlations limited to one lattice constant. Beyond the MG point, for $J_2$ slightly larger than $J_1/2$, the dimerized phase persists but the ground state acquires quantum fluctuations, leading to a finite correlation length and a short-ranged but nonvanishing impurity interaction. 

For stronger coupling, the results generally follow the tendency already observed for $J_c$ increasing from 0.01 to 0.5: the energy exhibits an increasingly pronounced even-odd dependence on the length of the segment between impurities. Consequently, $V(r)$ exhibits an intrinsic alternation that cannot be absorbed into a smooth envelope by multiplying by $(-1)^r$. 

The results are presented only for the angle between the impurity spins $\theta=0$. This is sufficient in the weak-coupling regime, where the interaction in an arbitrary configuration can be obtained by multiplying the effective potential by $\cos\theta$. While the SU(2) symmetry ensures that the interaction depends only on $\cos\theta$ also beyond this regime, for larger $J_c$, the dependence can be nonlinear. However, verifying the full angular dependence numerically in the strong-coupling regime remains computationally demanding and is left for future work

\section{Summary and Outlook}
In this work, we have investigated the effective interaction between two localized spin impurities coupled to a frustrated one-dimensional $J_1\!-\!J_2$ Heisenberg chain, combining controlled perturbation theory with large-scale DMRG calculations. By treating the impurity spins as classical vectors of fixed magnitude, we were able to extract the impurity–impurity interaction energy directly from ground-state energies and to analyze its dependence on impurity separation, coupling strength, and relative orientation.

In the weak-coupling regime $J_c\ll J_1,J_2$, we demonstrated that the effective interaction is governed by the static spin susceptibility of the host, in close analogy with the RKKY mechanism. The distance dependence of the interaction thus provides direct access to the long-distance spin correlations of the underlying spin liquid. In particular, the oscillatory behavior and decay of the interaction encode whether the host is in a gapless or gapped quantum spin liquid phase. At the SU(2)-symmetric critical point that separates these phases, our numerical results are consistent, within accessible system sizes, with the field-theoretical prediction of multiplicative logarithmic corrections to power-law decay, highlighting the sensitivity of impurity probes to subtle marginal effects. 

These perturbative results have been confirmed by DMRG calculations, which extend beyond the weak-coupling regime to provide results also for intermediate and strong coupling $J_c$. In particular, we observed that for $J_c$ comparable to $J_1,J_2$, the impurity-impurity interaction is strongly affected by the parity effects associated with the length of the intervening segment and cannot be described by a simple RKKY picture. This is consistent with the scenario where the impurities pin local host spins and effectively impose boundary conditions that cut the chain into finite segments. This regime reveals a crossover from a bulk-response-dominated interaction to a boundary-dominated one, providing a clear diagnostic of the breakdown of linear-response descriptions.

More broadly, our results establish impurity–impurity interactions as a versatile probe of quantum spin liquids. Unlike conventional correlation functions, impurity interactions integrate dynamical spin correlations over space and time and can therefore amplify long-distance and low-energy features that are otherwise difficult to resolve. From an experimental perspective, this suggests that controlled magnetic defects, adatoms, or substituted spins could be used to infer the nature of spin-liquid correlations indirectly through their effective interactions. In solid-state systems, magnetic or nonmagnetic defects-introduced either intrinsically, by chemical substitution, or controllably, for example via adatoms in scanning tunneling microscopy-act as localized perturbations that couple to the surrounding spin environment. The effective interaction between such impurities, inferred from local spectroscopy or thermodynamic signatures, offers indirect but sensitive access to the host’s spin correlations without requiring direct measurement of bulk dynamical response functions. Complementary opportunities arise in cold-atom and programmable quantum simulator platforms, where localized spins or controlled defects can be engineered and positioned with high precision. In these settings, impurity–impurity interactions could serve as tunable probes of correlation length, excitation gaps, and critical behavior, providing a versatile experimental diagnostic of quantum spin liquid phases across different dimensions and microscopic realizations.

Several directions for future work naturally emerge. Extending this approach to two-dimensional quantum spin liquids--such as Kitaev, kagome, or triangular-lattice systems--would allow one to probe fractionalization, gauge-field-mediated interactions, and topological effects through impurity responses. Generalizing the impurities to quantum spins would enable the study of Kondo-like screening and impurity entanglement in spin liquids. Finally, exploring the effects of anisotropies, external fields, or nonequilibrium driving could further enrich the use of impurities as diagnostic tools for exotic quantum phases.

\begin{acknowledgments}
The authors thank Nandini Trivedi for valuable discussions. M.K. would like to thank her for the hospitality during his stay at athe Ohio State University under the Bekker Programme of the Polish National Agency for Academic Exchange Grant no. BPN/BEK/2024/1/00115. This work was supported by the National Science Centre (Poland) under Grant No. 2024/53/B/ST3/02756. Numerical calculations were carried out using the resources provided by the Wrocław Centre for Networking and Supercomputing.
\end{acknowledgments}
\appendix
\section{Symmetry constraint on the impurity--impurity interaction}

\begin{lemma}[Rotational invariance of the impurity energy]
Consider a spin system described by the Hamiltonian
\begin{equation}
H(\bm S_{c,1},\bm S_{c,2})
=
H_{\rm host}
+
J_c\left(
\bm S_{c,1}\cdot \hat{\bm S}_i
+
\bm S_{c,2}\cdot \hat{\bm S}_j
\right),
\end{equation}
where:
\begin{enumerate}
    \item $H_{\rm host}$ is invariant under global SU(2) spin rotations,
    \item the impurity--host couplings are isotropic (Heisenberg form),
    \item $\bm S_{c,1}$ and $\bm S_{c,2}$ are 
    classical vectors of fixed length.
\end{enumerate}
Then the spectrum of $H$, and in particular its ground-state energy, is invariant under a simultaneous rotation of the impurity spins,
\begin{equation}
E(\bm S_{c,1},\bm S_{c,2})
=
E(R\bm S_{c,1},R\bm S_{c,2})
\qquad \forall\,R\in{\rm SO}(3).
\end{equation}
As a consequence, for fixed impurity spin magnitudes the ground-state energy can depend on the impurity orientations only through rotationally invariant combinations. In particular, for two impurities,
\begin{equation}
E = F\!\left(\bm S_{c,1}\cdot \bm S_{c,2}\right),
\end{equation}
for some function $F$, independent of the coupling strength $J_c$.
\end{lemma}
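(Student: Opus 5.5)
The plan is to build a unitary operator on the host Hilbert space that implements a global spin rotation, and to show it intertwines $H(\bm S_{c,1},\bm S_{c,2})$ with $H(R\bm S_{c,1},R\bm S_{c,2})$. Given $R\in{\rm SO}(3)$, pick a lift $\hat U_R\in{\rm SU}(2)$ acting as the tensor product of spin-$\frac12$ representations on all host sites. The lift is unique up to sign, and the sign is irrelevant under conjugation. The standard vector-operator identity is
\begin{equation}
\hat U_R^\dagger\,\hat S^a_k\,\hat U_R=\sum_b R_{ab}\,\hat S^b_k
\end{equation}
for every site $k$. Hypothesis 1 gives $\hat U_R^\dagger H_{\rm host}\hat U_R=H_{\rm host}$.

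The key step is the Heisenberg coupling, which uses hypothesis 2. We have
\begin{equation}
\hat U_R^\dagger\left(\bm S_{c,1}\cdot\hat{\bm S}_i\right)\hat U_R=\sum_{a,b}S^a_{c,1}R_{ab}\hat S^b_i=(R^T\bm S_{c,1})\cdot\hat{\bm S}_i ,
\end{equation}
and the same holds for impurity 2. Hence $\hat U_R^\dagger H(\bm S_{c,1},\bm S_{c,2})\hat U_R=H(R^T\bm S_{c,1},R^T\bm S_{c,2})$. Unitary conjugation preserves the spectrum, so the spectra of $H(\bm S_{c,1},\bm S_{c,2})$ and $H(R^T\bm S_{c,1},R^T\bm S_{c,2})$ coincide, including the ground-state energy. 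Since $R$ ranges over all of ${\rm SO}(3)$, so does $R^T$, and the claimed invariance follows. This holds for arbitrary $J_c$, because no perturbative step was used.

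For the consequence, I would use hypothesis 3: the pair $(\bm S_{c,1},\bm S_{c,2})$ lives on $S^2_{S_c}\times S^2_{S_c}$. I would show that ${\rm SO}(3)$ orbits on this set are labeled exactly by $\bm S_{c,1}\cdot\bm S_{c,2}$. Given two pairs with equal dot product, first rotate to make the first vectors coincide. Then rotate about that common axis to align the second vectors. This works because both second vectors sit at the same polar angle, fixed by the dot product and the equal lengths, relative to the axis. The degenerate collinear cases, $\cos\theta=\pm1$, need a separate remark but are trivial.

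An invariant function on the set is therefore constant on orbits, so $E=F(\bm S_{c,1}\cdot\bm S_{c,2})$. Here $F$ depends on $J_c$, $S_c$, $H_{\rm host}$ and the sites $i,j$ as parameters. The lemma's phrase ``independent of the coupling strength'' should be read as ``the functional form holds for every $J_c$''. I expect the only real subtlety to be this orbit-classification step, together with checking that $R$ versus $R^T$ conventions are handled consistently. Everything else is the routine covariance of spin operators.
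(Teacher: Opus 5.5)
Your proposal is correct and follows essentially the same route as the paper. Both conjugate $H$ by the unitary implementing a global spin rotation on the host, use the SU(2) invariance of $H_{\rm host}$ and the Heisenberg form of the coupling to obtain $H$ with rotated impurity vectors ($R^{-1}$ in the paper, $R^{T}$ in yours), and conclude by unitary invariance of the spectrum. Your explicit orbit classification on $S^2_{S_c}\times S^2_{S_c}$, and your reading of ``independent of $J_c$'' as ``the form holds for every $J_c$'', make precise what the paper only asserts in one line.
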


\begin{proof}
Let $R\in{\rm SO}(3)$ be an arbitrary global spin rotation. Since the host Hamiltonian is SU(2) invariant, there exists a unitary operator $U(R)$ acting on the host Hilbert space such that
\begin{equation}
U(R)\,\hat{\bm S}_\ell\,U(R)^\dagger
=
R\,\hat{\bm S}_\ell
\quad
\text{for all sites }\ell,
\end{equation}
and
\begin{equation}
U(R)\,H_{\rm host}\,U(R)^\dagger
=
H_{\rm host}.
\end{equation}

Treating $\bm S_{c,1}$ and $\bm S_{c,2}$ as external classical vectors, we find
\begin{align}
U(R)\,H(\bm S_{c,1},\bm S_{c,2})\,U(R)^\dagger
&=
H_{\rm host} + J_c\left(\bm S_{c,1}\cdot R\hat{\bm S}_i
+ \bm S_{c,2}\cdot R\hat{\bm S}_j \right) \nonumber\\
&=
H_{\rm host} + J_c\left[ (R^{-1}\bm S_{c,1})\cdot \hat{\bm S}_i
+ (R^{-1}\bm S_{c,2})\cdot \hat{\bm S}_j \right] \nonumber\\
&=
H(R^{-1}\bm S_{c,1},R^{-1}\bm S_{c,2}),
\end{align}
where we used the identity
$\bm a\cdot(R\bm b)=(R^{-1}\bm a)\cdot\bm b$.
Since unitary conjugation leaves the spectrum invariant,
\begin{equation}
{\rm spec}\,H(\bm S_{c,1},\bm S_{c,2})
=
{\rm spec}\,H(R\bm S_{c,1},R\bm S_{c,2}),
\end{equation}
and in particular,
\begin{equation}
E(\bm S_{c,1},\bm S_{c,2})
=
E(R\bm S_{c,1},R\bm S_{c,2}).
\end{equation}


For two spins of fixed magnitude the only independent scalar invariant is their dot product. This proves that the exact impurity--impurity interaction must be of the form
\begin{equation}
E = F(\bm S_{c,1}\cdot \bm S_{c,2})
\end{equation}
independently of the strength of the impurity--host coupling. Importantly, this constraint holds independently of the impurity–host coupling strength and does not rely on perturbation theory.
\end{proof}
\bibliography{bibliography}
\end{document}